\title{Evaluating regular path queries under the all-shortest paths semantics} 
\author{Domagoj Vrgo\v{c}}{University of Zagreb, Croatia \and PUC Chile and IMFD, Chile}{domagoj.vrgoc@math.hr}{[orcid]}{ANID -- Millennium Science Initiative Program -- Code ICN17\_002}
\authorrunning{D. Vrgo\v{c}} 
\keywords{graph databases, query languages, query evaluation} 
\newcommand{\prob}{\textsc{Enumerate all shortest paths}}
\newcommand{\srpq}{\textsc{Single path RPQ eval}}
\newcommand{\arpq}{\textsc{All-shortest RPQ eval}}
\newcommand{\crpq}{\textsc{Count All-shortest}}
\newcommand{\semp}[2]{\llbracket #1\rrbracket_{#2}}
\begin{document}

\maketitle

\begin{abstract}
The purpose of this paper is to explain how the textbook breadth-first search algorithm (BFS) can be modified in order to also create a compact representation of all shortest paths connecting a single source node to all the nodes reachable from it. From this representation, all these paths can also be efficiently enumerated. We then apply this algorithm to solve a similar problem in edge labelled graphs, where paths also have an additional restriction that their edge labels form a word belonging to a regular language. Namely, we solve the problem of evaluating regular path queries (RPQs) under the all-shortest paths semantics.
\end{abstract}

\section{Introduction}
\label{sec:intro}
Graph databases~\cite{AnglesABBFGLPPS18} have gained significant popularity in recent years, with multiple vendors offering graph database products~\cite{Webber12,TigerGraph,Stardog,AmazonNeptune,JenaTDB},   several efforts for defining the standard for graph data being under way~\cite{GQL,HarrisS13}, and the subject receiving substantial coverage in the research literature~\cite{MendelzonW89,Baeza13,HoganBC2020}. Some of the more popular graph formats in use today are edge-labelled graphs, which are the basis of the RDF standard~\cite{world2014rdf}, and property graphs~\cite{FrancisGGLLMPRS18}, which extend edge-labelled graphs by also allowing external annotations on nodes and edges. Graphs offer simple conceptualization of the application domain, modelling entities as nodes, and their relations as edges, with edge label indicating the connection type. Typical applications of graph database include crime detection network, transport networks, genealogy, etc. 

In terms of querying, graph databases offer some interesting challenges. Following \cite{AnglesABHRV17}, one could roughly divide graph queries into two main classes: (i) graph pattern; and (ii) path queries. Graph patterns look to match a smaller graph-shaped pattern into a graph database, while path queries allow exploring paths whose length is not known in advance, and thus require a limited form of recursion. A well studied class of path queries are regular path queries, or RPQs for short~\cite{Baeza13}. An RPQ specifies a regular language, and the query simply retrieves all pairs of nodes connected by a path whose edge labels form a word belonging to this language. RPQs are widely adopted in practice, forming part of the SPARQL standard for querying RDF data~\cite{HarrisS13}, and being supported in many popular graph engines~\cite{Webber12,TigerGraph,JenaTDB}. 

An interesting feature of RPQs is that their semantics, as usually defined in the research literature~\cite{Baeza13}, or in the SPARQL standard~\cite{HarrisS13}, only requires returning endpoints connected by a path satisfying a given regular expression, and not the path (or paths) witnessing this connection. However, in practice, returning paths can be a useful feature, which resulted in several engines supporting this option~\cite{Webber12,TigerGraph,Stardog,MillDB}, and the upcoming ISO standard for graph querying~\cite{GQL} specifying different options for returning paths. When returning paths, some of the available options would be to return: a single path, all paths, all simple paths, all shortest paths, all trails, etc. The traditional approach in research is to consider all paths (also termed all \emph{walks} in Graph Theory). However, in the presence of cycles, this can lead to an infinite amount of matching paths \cite{MendelzonW89}, which is undesirable in practice. For this reason, trails, or shortest paths, are favored by practical systems~\cite{FrancisGGLLMPRS18,Stardog,MillDB}.

In this paper, we will focus on solving the problem of evaluating RPQs in such a way that, in conjunction with all endpoint pairs connected by a path satisfying a given regular expression, we also return all shortest paths linking them. That is, we will evaluate RPQs under the all-shortest paths semantics. This feature is currently supported in some graph engines~\cite{FrancisGGLLMPRS18,Stardog,MillDB}, and is also being adopted by the graph standardization efforts~\cite{GQL}, so understanding how to solve this problem algorithmically seems rather relevant for practice.

\paragraph*{Contributions} In this paper we provide a simple algorithm for evaluating regular path queries under the all-shortest paths semantics, assuming that at least one of the query endpoints is known. We will write $q=(v,regex,?y)$ for such a query, signalling that the starting node of the path is fixed to $v$\footnote{Notice that this is done in order to anchor the search algorithm, and is adopted by some graph engines~\cite{Erling12}.}.  For this solution, we modify the standard breadth-first graph traversal, in order to be able to store all shortest paths as well. In essence, we are using breadth-first search in order to create a directed unlabelled graph which can compactly encode all of these paths, and allows them to be enumerated efficiently. This approach is often used in enumeration algorithms in databases and information extraction \cite{FlorenzanoRUVV20,GrezRU19}. We start by explaining how to obtain all shortest paths in an (unlabelled) graph, reachable from some fixed node. We then extend this algorithm in order to evaluate RPQs under the all-shortest paths semantics.

\paragraph*{Organization} The paper is structured as follows:
\begin{itemize}
\item In Section \ref{sec:prelim} we define graphs, graph databases, RPQs, and recall the BFS algorithm.
\item In Section \ref{sec:bfs}, we show how to extend the textbook BFS algorithm so that, starting from a fixed node $v$, we can return all $v'$ reachable from $v$, along with the list of all shortest paths connecting them.
\item In Section \ref{sec:algo}, we then apply this extension in order to enumerate all-shortest paths that witness an answer to an RPQ which has at least one endpoint fixed.
\item We conclude in Section \ref{sec:concl}.
\end{itemize}

\paragraph*{Proviso} The purpose of this paper is to solve a single database problem that occurs in practice. It is therefore short and to the point. It provides a theoretical algorithm for this problem and nothing else. In terms of novelty, we believe that the algorithm presented in Section 3 is folklore, and has been considered before. However, we could not locate a reference to the specific problem we are solving, so we include it here for completeness. To the best of our knowledge, using this algorithm to evaluate RPQs under the all-shortest paths semantics (Section \ref{ss:contrib}) is a new application not explored previously, at least not with its links to enumeration algorithms.

\section{Preliminaries}
\label{sec:prelim}
In this section we recall basic notions about graphs, graph databases, regular path queries. and the breadth-first graph traversals. We also make precise the notion of delay used in our enumeration algorithms.

\paragraph*{Graphs} A \emph{graph} $G$ is a pair $(V,E)$, where $V$ is a finite set of nodes, and $E\subseteq V\times V$ is a set of edges. A path in $G$ is simply a sequence $n_1\ldots n_k$ of nodes, such that $(n_i,n_{i+1})\in E$, for $i=1\ldots k-1$. 

\paragraph*{Graph databases} Following database theory literature \cite{AnglesABHRV17,Baeza13,MendelzonW89}, and the RDF specification~\cite{world2014rdf}, for a labelling alphabet $\Sigma$, we define edge-labelled graphs as follows: 

\begin{definition}
An \emph{edge-labelled graph} $G$ is a tuple $(V,E)$, where:
\begin{itemize}
\item $V$ is a finite set of nodes.
\item $E\subseteq V\times \Sigma \times V$ is a finite set of $\Sigma$-labelled edges.
\end{itemize}
\end{definition}

When talking about edge-labelled graphs, we will often refer to them as \emph{graph databases}. A \emph{path} in a graph database is a sequence $\pi = n_1 a_1 n_2 a_2 \ldots a_k n_{k+1}$, where $n_i\in V$, and $a_i\in \Sigma$, such that $(n_i,a_i,n_{i+1})\in E$, for $i=1\ldots k$. We define the label of a path $\pi$, denoted $\lambda(\pi)$ as the word $a_1\cdots a_k$, formed by concatenating the labels of the edges on this path.

\paragraph*{Querying graph databases} One of the most common class of queries over graph databases are regular path queries, or RPQs, which are well studied in the research literature \cite{Baeza13,MendelzonW89}, and are supported by the SPARQL standard~\cite{HarrisS13}, as well as many other graph query languages such as, for instance, Cypher~\cite{FrancisGGLLMPRS18} or GQL~\cite{GQL}.
For us, a \emph{regular path query} over a graph database $G$, will be an expression of the form $(v,regex,?x)$, where $v$ is a node in $G$, and $regex$ is a regular expression over $\Sigma$, the edge labelling alphabet of $G$\footnote{We will restrict the search to the single starting node. RPQs generally can also have a free variable in the place of $v$.}. Given an RPQ $q=(v,regex,?x)$ over a graph database $G$, its semantics is defined as the set of all nodes $v'$ in $G$, that are reachable from $v$ by a path $\pi$ such that $\lambda(\pi)$ is a word in the language of the regular expression $regex$. We denote the set of all answers for $q$ over $G$ with $\semp{q}{G}$.

While this version of RPQ semantics only asks for nodes reachable from $v$, one will often be also interested in retrieving the paths which witness this connection. In this regards, we will study two versions of evaluating RPQs: (i) the one which, for each node $v'\in \semp{q}{G}$, also returns a single shortest path witnessing this connection; and (ii) the one which, in addition to $v'$ returns all shortest paths linking $v$ and $v'$. We remark that the basic version of the semantics, returning only the reachable nodes, is the W3C standard for SPARQL property paths~\cite{HarrisS13}. Version (i) above would be a reasonable addition to the SPARQL standard, while the version (ii) is supported by the upcoming GQL standard~\cite{GQL} and Cypher~\cite{FrancisGGLLMPRS18}.

\paragraph*{Breadth-first search (BFS)} Given a (unlabelled) graph $G$, and a node $v$ in $G$, all nodes reachable from $v$, and a shortest path connecting them, can be obtained by the textbook BFS algorithm~\cite{cormen2022introduction}. We include the code of this version of BFS in Algorithm \ref{alg:bfs}.

 \begin{algorithm}
  \caption{All nodes in $G$ reachable from $v$, plus a shortest path}
 \label{alg:bfs}
 \begin{algorithmic}[1]
 \Function{Search}{$G,v$}
 	\State $Open.init()$ \Comment{Empty queue(BFS)}
 	\State $Visited.init()$ \Comment{Empty set of visited nodes.}
	 \State $start \gets (v,\bot)$
 	\State $Open.push(start)$
 	\State $Visited.push(start)$
 	\While{!Open.isEmpty()}
 		\State current=Open.pop() \Comment{$current = (n,prev)$}
 		\State $solutions.add(n)$ \Comment{This is one of the reachable nodes}
 		\State $ReconstructPath(current)$
 		\For{$n'$ s.t. $(n,n')\in E$} \Comment{Neighbours of $n$}
	 		\If{$!(n'\in Visited)$} \Comment{Ignore $prev$}
	 			\State $next = (n',n)$
 				\State Open.push(next)
 				\State Visited.push(next)
 			\EndIf
 		\EndFor 
 	\EndWhile
 \EndFunction
  \end{algorithmic}
 \end{algorithm}
 
 Basically, Algorithm \ref{alg:bfs}, keeps track of pairs $(n',n)$, where $n'$ is a newly found node on some shortest path from $v$, and $n$ is its immediate predecessor on this shortest path. When searching whether $n'\in Visited$ (line 12), we only compare equality by first component of the pair stored in $Visited$. The function $ReconstrucPath(current)$ (line 10), for $current=(n,prev)$, reconstructs the shortest path from $v$ to $n$ by following $prev$ pointers in $Visited$. Namely, it starts in $n$, and then searches $Visited$ with $prev$, and so on, until reaching $\bot$. This is well defined since for $(n,prev)$ to be in $Visited$, there also must exist a pair $(prev,prev')$ in $Visited$.

\paragraph*{Output linear delay} Since we will deal with queries that can potentially have a large number of outputs, we will use the paradigm of \emph{enumeration algorithms} \cite{bagan2006mso,Segoufin13,FlorenzanoRUVV20,GrezRU19} to measure the efficiency of our solutions. Enumeration algorithms work in two phases: first, a \emph{pre-processing phase} is permitted in order to construct some data structures. Following this, the \emph{enumeration phase} uses these data structures to enumerate the solutions, one by one, and without repetitions. The efficiency of an enumeration algorithm is measured by determining both the complexity of the preprocessing phase,  and the  delay  between any two elements produced during the enumeration phase. The ideal case, when the delay does not depend on the input, is called \emph{constant delay}. Unfortunately, when outputting paths, the length of the path itself can depend on the input (the graph), so constant delay seems unachievable. Instead, we will aim to have \emph{output-linear delay}, where the delay is linear in the size of each output element. For instance, when outputting paths, this means that the time needed to output a single path is linear in its length (i.e. the number of nodes in the path), and the delay before starting to write down the next path is constant. This, in a sense, is optimal, since we have to at least write down each element in the path. 

\section{Enumerating all shortest paths from a single source}
\label{sec:bfs}
In this section we assume that $G$ is an (unlabelled) graph. That is, $G=(V,E)$, with $E\subseteq V\times V$. The problem we will solve is as follows:

\begin{framed}
  \begin{tabular}{ll}
    \textsc{Problem}: & \prob\\
    \textsc{Input}: & A graph $G = (V,E)$, and a node $v\in V$.\\
    \textsc{Output}: & For each node $v'$ reachable from $v$ in $G$, the list of all shortest paths \\
    &  connecting them.
  \end{tabular}
\end{framed}

In order to solve \prob, we will modify the standard breadth-first search algorithm~\cite{cormen2022introduction}, which, by default, also returns a single shortest path for each reachable node. The code of algorithm for \prob\ is given in Algorithm \ref{alg:plain}. We believe this algorithm to be folklore, however, we could not find a reference so we include it for completeness. The main difference between Algorithm~\ref{alg:bfs} and Algorithm~\ref{alg:plain} will be in bookkeeping. Namely, where standard BFS only keeps tracks of nodes it visited thus far in the set $Visited$, plus a pointer to a node used to reach the visited node, this extended version will need to store triples of the form $(n,dist,prevList)$ in $Visited$. The intuition here is that:
\begin{itemize}
\item $n$ is a node we already reached by some path;
\item $dist$ is the length of the shortest path from $v$ to $n$; and
\item $prevList$ keeps track of all the nodes that are an immediate predecessor of $n$, on some path of length $dist$ starting in $v$.
\end{itemize}

 \begin{algorithm*}[tbh]
  \caption{\prob\ for $G,v$}
 \label{alg:plain}
 \begin{algorithmic}[1]
 \Function{Search}{$G,v$}
 	\State $Open.init()$ \Comment{Empty queue.}
 	\State $Visited.init()$ \Comment{Empty set}
	 \State $start \gets (v,0,\bot)$
 	\State $Open.push(start)$
 	\State $Visited.push(start)$
 	\While{!Open.isEmpty()}
 		\State current=Open.pop() \Comment{$current = (n,depth,prevList)$}
			\State $solutions.add(n)$ \Comment{All shortest paths from $v$ already reached $n$}
			\State $reconstructPaths(current)$ \Comment{Enumerate all shortest paths}

 		\For{$n'$ s.t. $(n,n')\in E$}
	 		\If{$!(n' \in$ Visited$)$} \Comment{prevList or depth are not compared for equality}
				\State new = $(n',depth+1,prevList.begin = prevList.end = current)$
 				\State Open.push(new)
 				\State Visited.push(new)
 			\EndIf
 			\If{$n' \in$ Visited} 
 			    \State new = Visited.get($n'$) \Comment{$new = (n',depth',prevList')$} 
 			    \If{$depth' == depth+1$} \Comment{Another shortest path to $n'$}
 			        \State prevList'.end$->$next = current
 			        \State prevList'.end = current 
 			    \EndIf
 			\EndIf
 		\EndFor 
 	\EndWhile
 \EndFunction
  \end{algorithmic}
 \end{algorithm*}

Basically, where standard BFS would only keep a single pointer to a previous node on the shortest path, our version will need to store a list of previous nodes on any shortest path. For our purposes, we will assume that $prevList$ is a linked list of pointers to triples of the form $(n,dist,prevList)$ that are stored in the set $Visited$. We assume that $prevList$ has methods $prevList.begin$, retrieving the first element, $prevList.end$, retrieving the final element, and each element in $prevList$ has a $next$ pointer to the following element in the list. For the set $Visited$ we assume that $n$ is the search key. Namely, we write $n\in Vistied$ to denote that there is a triple $(n,depth,prevList)$ in $Visited$. In order to retrieve this element we will use $Visited.get(n)$.

The algorithm works as standard BFS (e.g. as in Algorithm \ref{alg:bfs}) when a new node is reached for the first time (lines 12 -- 15). However, when the same node is revisited (line 16), if we reach it with a path that is of the same length as the shortest path to this node, we will simply add the predecessor we are expanding to the $prevList$ of the reached node (lines 19 -- 20). In essence, our algorithm creates a directed acyclic graph (encoded in $prevList$s of elements in $Visited$), which allows enumerating all shortest paths to a reached node. Upon popping a node from the queue (line 8), we know that all shortest paths already reached this node (in previous iteration we exhaust all the path of length needed to reach the node), so we can add it as a result, and print out all the shortest paths reaching it. This is done using the $prevList$, which in turn has pointers to other $prevList$s of predecessor nodes on some shortest path, etc. The shortest paths can then be recovered by doing a depth-first traversal of the constructed DAG, with all path ending in the node $(v,0,\bot)$, which marks the beginning of the path. We illustrate how the algorithm works via the following example.

\begin{example} Consider the graph of Figure \ref{fig:graph} (left).  Running Algorithm \ref{alg:plain} over this graph will result in the  DAG displayed in Figure \ref{fig:graph} (right), being stored in $Visited$. The DAG is constructed dynamically, and used to enumerate all shortest paths between $v$ and any node reachable from $v$. A pair $(n,dist)$ signals that the node $n$ was reached by a shortest path of length $dist$. The $prevList$ is illustrated in Figure \ref{fig:graph} (right) as arrows pointing to previous elements stored in $Visited$. That is, the element stored in $Visited$ consists of $(n,dist)$, and the list of pointers to previous elements. For instance, node $n_4$ can be reached from $v$ by three different paths of length 2. This is illustrated by having three backwards pointers to $(n_1,1), (n_2,1)$, and $(n_3,1)$. 
Notice that the enumeration in Algorithm \ref{alg:plain} is done upon popping the node from the queue (line 10), since at this point we already explored all possible shortest paths reaching it. In a sense, the enumeration is done before finishing the construction of the entire DAG containing all the answers, since one keeps on adding new elements for longer paths. \qed	 
\end{example}

\begin{figure*}[b]
\vspace*{0.3cm}
\resizebox{\textwidth}{!}{
\begin{tikzpicture}[->,>=stealth',auto, thick, scale = 1.0,initial text= {},    a/.style={ circle,inner sep=2pt    }]
 
		  \node [state] at (0,0) (q0) {$v$};
		  \node [state] at (2.5,1.5) (q1) {$n_1$};
		  \node [state] at (2.5,0) (q2) {$n_2$};
		  \node [state] at (2.5,-1.5) (q3) {$n_3$};	  
		  \node [state] at (5,0) (q4) {$n_4$};
		  \node [state] at (7.5,0) (q5) {$n_5$};
  
		  \path[->] (q0) edge  (q1);
		  \path[->] (q0) edge  (q2);
		  \path[->] (q0) edge  (q3);
		  \path[->] (q1) edge  (q4);
 		  \path[->] (q2) edge  (q4);
		  \path[->] (q3) edge  (q4);
		  \path[->] (q4) edge  (q5);
		  
        \begin{scope}[xshift=10cm]
        
        	\node [a] at (0,0) (b) {$\bot$};
        	\node [a] at (2,0) (v) {$(v,0)$};
        	\node [a] at (4,1.5) (n1) {$(n_1,1)$};
        	\node [a] at (4,0) (n2) {$(n_2,1)$};
        	\node [a] at (4,-1.5) (n3) {$(n_3,1)$};
        	\node [a] at (6,0) (n4) {$(n_4,2)$};
        	\node [a] at (8,0) (n5) {$(n_5,3)$};
        	
		  \path[->] (v) edge  (b);
		  \path[->] (n1) edge  (v);		  
		  \path[->] (n2) edge  (v);
		  \path[->] (n3) edge  (v);
		  \path[->] (n4) edge  (n1);		  
		  \path[->] (n4) edge  (n2);		  
		  \path[->] (n4) edge  (n3);		  		  		  
		  \path[->] (n5) edge  (n4);		  		  
		
		\end{scope}

		 \end{tikzpicture}
}
\caption{A graph (left), and the DAG constructed by Algorithm \ref{alg:plain} to enumerate all shortest paths from $v$ to any reachable node (right). In the DAG, the $prevList$ component is illustrated via arrows. For instance, the three arrows from $(n_4,2)$ correspond to the three pointers on its $prevList$.}
\label{fig:graph}
\end{figure*}
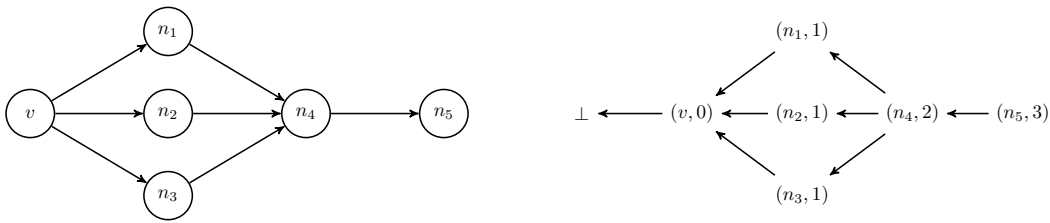

Analysing the running time of the algorithm, we obtain the following:

\begin{theorem}[Folklore]\label{theo:enum}
Given a graph $G$, and a node $v$, Algorithm \ref{alg:plain} solves \prob\ in time $O(|V|+|E|+|O|)$, where $O$ is the size of the output; that is, the number (and length) of all shortest paths starting in $v$.
\end{theorem}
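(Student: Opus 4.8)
The plan is to prove two things separately: that Algorithm~\ref{alg:plain} is \emph{correct} (it enumerates, without repetition, exactly the shortest paths from $v$ to every reachable node), and that it runs within the claimed bound $O(|V|+|E|+|O|)$. Both parts hinge on the classical layering property of BFS, so I would first establish that as a lemma and then reuse it throughout.

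First I would prove the \emph{layering invariant}: because $Open$ is a FIFO queue seeded with $(v,0,\bot)$, each node is enqueued exactly once (at the moment of first discovery), and a triple of depth $d+1$ is pushed only while expanding a triple of depth $d$; a standard induction then shows that the pop sequence is sorted by non-decreasing $depth$ and that all triples with $depth=d$ are popped before any triple with $depth=d+1$. From this I would derive that the $depth$ stored with a node $n$ equals the true distance $\mathrm{dist}(v,n)$: the first popped node having an edge into $n$ must itself lie at distance $\mathrm{dist}(v,n)-1$, since a strictly closer predecessor would contradict the definition of $\mathrm{dist}(v,n)$, and this predecessor creates $n$ at line~13 with $depth = \mathrm{dist}(v,n)$.

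Next I would prove the key \emph{$prevList$-completeness invariant}, where I expect the main obstacle: at the instant a triple $(n,d,prevList)$ is popped, $prevList$ holds precisely the immediate predecessors of $n$ on shortest paths from $v$. The argument is a careful bookkeeping over the neighbour loop (lines~11--20). Every shortest-path predecessor $m$ of $n$ sits at distance $d-1$, hence is popped before $n$; when $m$ is expanded it either creates $n$ (installing $m$ as the first $prevList$ entry, line~13) or, finding $n$ already visited with $depth' = d = (d-1)+1$, appends $m$ (lines~19--20). Conversely, any node $m'$ examined from a layer other than $d-1$ fails the test $depth' == depth+1$ (line~18) and is never added, so no spurious predecessor enters $prevList$. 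Combined with the layering invariant, this shows that when $n$ is popped not only is its own $prevList$ final, but so are the $prevList$s of all its DAG-ancestors, since they lie in strictly earlier layers; this is exactly what $reconstructPaths$ needs in order to recover \emph{all} paths.

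With these invariants, correctness follows: the pointers recorded in $Visited$ define a DAG in which the backward paths from $(n,d)$ to $(v,0,\bot)$ are in bijection with the shortest paths from $v$ to $n$, and $reconstructPaths$ is a depth-first traversal of this DAG emitting each such path once. For the running time I would account for the two phases separately. Assuming $Visited$ supports $O(1)$ membership and $get$ (e.g.\ by hashing or indexing on $n$) and that $prevList$ append is $O(1)$ via the stored $begin$ and $end$ pointers, the pure BFS work is $O(|V|+|E|)$: each node is pushed and popped once and each edge is inspected a constant number of times. The remaining cost is the $reconstructPaths$ calls, which I would bound by $O(|O|)$ by observing that the DAG has \emph{no dead ends} --- every node's $prevList$ is non-empty except $v$, whose single pointer leads to $\bot$ --- so the depth-first traversal never explores a branch that fails to yield a complete path. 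Hence the enumeration has output-linear delay and the total time spent writing the paths is proportional to their combined length, namely $|O|$. Summing the two phases yields the stated $O(|V|+|E|+|O|)$ bound.
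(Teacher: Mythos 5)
Your proposal is correct and follows essentially the same route as the paper: standard BFS accounting for the $O(|V|+|E|)$ portion, plus the observation that the $prevList$ DAG has no dead ends, so the depth-first reconstruction costs time proportional to the number and length of the emitted paths, giving the $O(|O|)$ term. The only difference is thoroughness: the paper's proof is a brief complexity sketch that defers the ``smart backtracking'' details to a citation, whereas you additionally make explicit the BFS layering and $prevList$-completeness invariants (including that all ancestors' lists are final at pop time, which justifies the early-output calls to $reconstructPaths$) that the paper only asserts informally in the prose around Algorithm~\ref{alg:plain}.
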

\begin{proof}
In terms of complexity analysis, the algorithm runs similarly as standard BFS, covering the $O(|V|+|E|)$ portion. The main difference is in $ReconstructPaths$ function which, starting at $current = (n,depth,prevList)$ uses the linked list $prevList$ in order to enumerate all the shortest paths from $v$ to $n$. The number of such paths can be exponential, but a depth-first traversal of the DAG stored in $Visited$ and accessed through $prevList$ element of each triple in $Visited$ can be navigated in time proportional to the length and number of shortest paths between $v$ and $n$. Notice that here each path in this DAG contains such a shortest path, so the traversal can be done efficiently by remembering each position in $prevList$ from which a fork is possible. A detailed explanation of such ``smart" backtracking can be found in \cite{GrezRU19}.
\end{proof}

\paragraph*{Connection with enumeration algorithms}

It is interesting to notice that the logic of the algorithm closely follows the idea of enumeration algorithms used in databases and information extraction \cite{GrezRU19,FlorenzanoRUVV20}. In these algorithms one usually has a preprocessing phase that builds a DAG from which the answers can be enumerated, similarly to our approach in Algorithm \ref{alg:plain}. In these scenarios, the entire DAG is constructed, and only then the answers are enumerated. In our scenario, this would amount to starting the enumeration from each node of the created DAG. Of course, we can already enumerate all the shortest paths when a node is popped from the stack (lines 8 -- 10 in Algorithm \ref{alg:plain}). In the context of enumeration algorithms, this is known as early output, and it allows returning the answers as soon as they are discovered, which is a desirable feature in a classical database pipeline. Notice that a set of shortest paths can actually be returned earlier,  namely each time a node is (re)visited (line 12 and line 16), but in this case the paths will not be grouped according to the reached node. One difference between our algorithm and classical enumeration algorithms is also that in enumeration algorithms, the search is done in a graph in which only some reachable nodes are targets. This phenomenon also occurs when querying graph databases, which we explore in the next section.

\smallskip

Through the looking glass of enumeration algorithms~\cite{FlorenzanoRUVV20}, we can alternatively phrase Theorem~\ref{theo:enum} as follows:

\begin{corollary}
The problem \prob\ can be solved with output linear delay and $O(|V|+|E|)$ preprocessing.
\end{corollary}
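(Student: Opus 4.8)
The plan is to recast Theorem~\ref{theo:enum} in the enumeration framework by separating the single $O(|V|+|E|+|O|)$ run of Algorithm~\ref{alg:plain} into a preprocessing phase that builds the DAG and an enumeration phase that reads the paths off it. First I would strip the \emph{reconstructPaths} calls out of Algorithm~\ref{alg:plain}, leaving only the BFS traversal that populates $Visited$ with the triples $(n,dist,prevList)$. I would then argue that this construction terminates in $O(|V|+|E|)$ time: it is the standard BFS bound, since each node is pushed onto $Open$ exactly once and each edge $(n,n')\in E$ is inspected exactly once, and the only extra work per edge is either creating a singleton $prevList$ (line 13) or appending one pointer to $prevList'$ (lines 19--20), both $O(1)$ operations. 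At the end of this phase $Visited$ stores the complete DAG, which is the data structure the enumeration phase will consume.

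For the enumeration phase I would show that, for each reachable node $v'$, the \emph{reconstructPaths} routine enumerates the shortest $v$-to-$v'$ paths with output-linear delay. The key structural observation is that the DAG stored in $Visited$ is \emph{dead-end free}: every $prevList$ pointer leads from a triple at distance $dist$ to a triple at distance $dist-1$, so following pointers strictly decreases the distance and every branch of a depth-first traversal is guaranteed to terminate at $(v,0,\bot)$ and thereby spell out a genuine shortest path. Consequently no search effort is ever wasted on partial paths that cannot be completed. Using the ``smart'' backtracking of \cite{GrezRU19}, which remembers the highest $prevList$ position at which an unexplored fork remains, the traversal writes each path in time linear in its length and then reaches the next fork in constant time, which is exactly output-linear delay.

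I expect the main obstacle to be the dead-end freeness argument, i.e. justifying that the DFS never backtracks into a branch that fails to reach the source, since this is precisely what upgrades the amortized $O(|O|)$ bound of Theorem~\ref{theo:enum} into a per-path delay guarantee. Once that invariant is established --- it follows by induction on $dist$ from the fact that a triple $(n,dist,prevList)$ is inserted, and a pointer appended to it, only when a predecessor triple at distance $dist-1$ already lies in $Visited$ --- the delay bound and the $O(|V|+|E|)$ preprocessing bound combine to give the claim. Grouping the enumeration by reached node upon popping from $Open$ additionally yields the early-output behaviour, matching the presentation of \cite{FlorenzanoRUVV20}.
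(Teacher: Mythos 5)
Your proposal is correct and follows essentially the same route as the paper's own proof: defer the \emph{reconstructPaths} calls, observe that the pointer-building work is $O(1)$ per edge so the DAG is built within the standard BFS bound, and then enumerate by depth-first traversal with the backtracking technique of \cite{GrezRU19}, relying on the fact that every branch of the DAG reaches $(v,0,\bot)$ and hence spells a genuine shortest path. Your explicit induction on $dist$ establishing this dead-end-freeness is a welcome elaboration of a point the paper states only in passing (``each path in this DAG contains such a shortest path''), but it is the same argument, not a different one.
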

\begin{proof}
As discussed above, this amounts to \emph{not} returning query answers as soon as they are available, but rather constructing the entire $Visited$ DAG using Algorithm~\ref{alg:plain}, which takes $O(|V|+|E|)$ time. Following this, we can simply enumerate the answers by starting in every node in the $Visited$ data structure. One could also keep a list of these nodes to start enumeration more easily. As explained previously, enumerating the shortest paths is done via the DFS algorithm, and takes time proportional to length of each path, which is in a sense optimal, since we must at the very least write down each output path symbol by symbol.
\end{proof}

\section{Adapting search algorithms for graph databases}
\label{sec:algo}
In this section we extend the BFS algorithm in order to answer regular path queries (RPQs) over graph databases. For this, we begin by covering the case when only a single shortest path is returned, and then extend the presented algorithm to return all shortest paths. Recall that, given a $\Sigma$-labelled graph $G$, we only consider regular path queries of the form:
$$q = (v,regex,?x)$$
where $v$ is a node of $G$, and $regex$ is a regular expression over $\Sigma$. Namely, we assume that the starting point of an RPQ is fixed. Should we wish to extend our results to cover the case of queries where the starting point is unknown, we can simply run the proposed algorithms from every node of the graph $G$.

\subsection{Single path version of RPQs}

We first recall how RPQs can be evaluated in the first version of the semantics. Namely, we will solve the following problem:

\begin{framed}
  \begin{tabular}{ll}
    \textsc{Problem}: & \srpq\\
    \textsc{Input}: & A graph database $G$, and an RPQ $q = (v,regex,?x)$ over $G$.\\
    \textsc{Output}: & For each node $v'\in \semp{q}{G}$, a single shortest path witnessing this. 
  \end{tabular}
\end{framed}

One method to evaluate RPQs, long standing in the theoretical literature \cite{MendelzonW89,Baeza13}, involves the cross-product construction, often known as the \emph{product graph}. Given a graph database $G = (V,E)$, and an RPQ $q = (v,regex,?x)$ over $G$, the product graph is constructed by first converting the regular expression $regex$ into an equivalent non-deterministic finite state automaton $(Q,\Sigma,\delta,q_0,q_F)$. Here $Q$ is a set of states, $\Sigma$ a finite alphabet of edge labels (same as i $G$), $\delta$  the transition relation over $Q\times \Sigma\times Q$, and $q_0,q_F$ the initial and final state, respectively. There are several standard ways of converting an expression to an equivalent automaton, for instance using Thompson's, or Ghluskov's construction \cite{sakarovitch2009elements}. In this paper, we will assume that the automaton has a single initial, and a single final state, and that no $\varepsilon$-transitions are present. The product graph $G_\times$, is then constructed as the graph database having $V_\times = V\times Q$ as the set of nodes. The edges, denoted $E_\times$,  are defined for each $(x,q_x)$ and $(y,q_y)$, and a label $l$, whenever: (i) $(x,l,y)\in E$;  and (ii) $(q_x,l,q_y)\in \delta$.  The answers to the query can now be found by running the BFS algorithm (Algorithm~\ref{alg:bfs}) on $G_\times$, but now taking $(v,q_0)$ as the initial node, and only returning answers when reaching a node paired with $q_F$, the accepting state of the NFA for $regex$.

 \begin{algorithm}[t]
  \caption{\srpq\ for a graph database $G$ and an RPQ $q= (v,regex,?x)$}
 \label{alg:search}
 \begin{algorithmic}[1]
 \Function{Search}{$G,q$}
 	\State $\mathcal{A}\gets Automaton(regex)$
 	\State $Open.init()$ \Comment{Empty queue}
 	\State $Visited.init()$ \Comment{Empty set}
	 \State $start \gets (n,q_0,\bot)$
 	\State $Open.push(start)$
 	\State $Visited.push(start)$
 	\While{!Open.isEmpty()}
 		\State current=Open.pop() \Comment{$current = (n,q,prev)$}
 				\If{$q== q_F$} \Comment{A solution is found}
 					\State $solutions.add(n)$
 					\State $ReconstructPath(current)$
 				\EndIf
 		\For{$neighbour = (n',q') \in Neighbours(current)$}
	 		\If{$!neighbour \in Visited$}
				\State $next = (n',q',n)$ 				
 				\State Open.push($next$)
 				\State Visited.push($next$)
 			\EndIf
 		\EndFor 
 	\EndWhile
 \EndFunction
  \end{algorithmic}
 \end{algorithm}
 
 For completeness, we include this algorithm in Algorithm \ref{alg:search}. Similar solutions have previously been proposed in the literature \cite{BaierDRV17,FiondaPG15}. As explained, this basically amounts to running Algorithm \ref{alg:bfs} on $G_\times$, taking $(v,q_0)$ as the start state of the search (line 5), and only returning a result if a node of the form $(n,q_F)$ is reached in $G_ \times$ (line 10). In Algorithm \ref{alg:search}, the set $Visited$ stores triples of the form $(n,q,prev)$; that is, the node $(n,q)\in G_\times$, and a pointer to the node used to reach this node on the shortest path from $(v,q_0)$. The search key for $Visited$ (used in line 14) is just the pair $(n,q)$. The one subtle detail is how the function $Neighbours(current)$ is computed (line 13). Basically, we start in $current=(n,q,prev)$, and simply retrieve all the neighbours of $(n',q')$ of the node $(n,q)$ in $G_\times$. That is, look for a label $a$, such that: (i) $(n,a,n') \in E$; and (ii) $(q,a,q')\in \delta$. Paths are reconstructed (line 12) in the same manner as for standard BFS. Notice that we assume $\mathcal{A}$ to have a single accepting state, which allows to avoid generating the same solution twice.
 
 \begin{algorithm*}[t]
  \caption{\arpq\ for a graph database $G$ and an RPQ $q= (v,regex,?x)$}
 \label{alg:paths}
 \begin{algorithmic}[1]
 \Function{Search}{$G,q$}
 	\State $\mathcal{A}\gets Automaton(regex)$ \Comment{$q_0$ initial, $q_F$ final}
 	\State $Open.init()$ \Comment{Empty queue(BFS).}
 	\State $Visited.init()$ \Comment{Empty set of visited nodes.}
	 \State $start \gets (v,q_0,0,\bot)$
 	\State $Open.push(start)$
 	\State $Visited.push(start)$
 	\While{!Open.isEmpty()}
 		\State current=Open.pop() \Comment{$current = (n,q,depth,prevList)$}
		\If{$q== q_F$} \Comment{We reached a solution}
			\State $solutions.add(n)$ \Comment{All shortest paths already reached $n$}
			\State $reconstructPaths(current)$ \Comment{Reconstruct shortest paths to $n$}
		\EndIf

 		\For{next=$(n',q')$ $\in$ Neighbours(current)}
	 		\If{!(next) $\in$ Visited} \Comment{prevList or depth are not compared for equality}
				\State new = $(n',q',depth+1,prevList.begin = prevList.end = current)$
 				\State Open.push(new)
 				\State Visited.push(new)
 			\EndIf
 			\If{next=(n',q') $\in$ Visited} 
 			    \State new = Visited.get(n',q') \Comment{new = (n',q',depth',prevList')} 
 			    \If{depth' == depth+1} \Comment{Another shortest path to (n',q')}
 			        \State prevList'.end$->$next = current
 			        \State prevList'.end = current \Comment{This updates the values in Visited}
 			    \EndIf
 			\EndIf
 		\EndFor 
 	\EndWhile
 \EndFunction
  \end{algorithmic}
 \end{algorithm*}

\subsection{Returning all shortest paths}
\label{ss:contrib}

Here we look at the following version of the problem:

\begin{framed}
  \begin{tabular}{ll}
    \textsc{Problem}: & \arpq\\
    \textsc{Input}: & A graph database $G$, and an RPQ $q = (v,regex,?x)$ over $G$.\\
    \textsc{Output}: & For each node $v'\in \semp{q}{G}$, enumerate all  shortest paths $\pi$ \\ 
     & from $v$ to $v'$ s.t. $\lambda(\pi)$ belongs to the language of $regex$. 
  \end{tabular}
\end{framed}

The algorithm for \arpq\ follows the pattern used for \srpq. Namely, we will now run Algorithm \ref{alg:plain} on $G_\times$, with $(v,q_0)$ as the starting node, where $q_0$ is the initial state of the automaton for $regex$, and returning answers only when reaching $(n,q_F)$, with $q_F$ being the accepting state of the automaton for $regex$. The algorithm for \arpq\ is presented in Algorithm \ref{alg:paths}.

In order to assure the correctness of the algorithm, we need to be careful to restrict our automaton in order not to repeat certain paths. More precisely, we will need the automaton to be unambiguous\footnote{I would like to thank Wim Martens for pointing this out to me in private communication.}, and to have a single final state. With these restrictions in mind, we get the following result.

\begin{theorem}\label{theo:corr}
Let $q = (v,regex,?x)$ be an RPQ over $G$. If the automaton $\mathcal{A}$ for $regex$ is unambiguous and has a single accepting state, then Algorithm \ref{alg:paths} will return, for each $v' \in \semp{q}{G}$, all shortest paths $\pi$ connecting $v$ and $v'$, such that $\lambda(\pi)$ is accepted by $\mathcal{A}$, \emph{precisely once}.
\end{theorem}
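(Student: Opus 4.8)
The plan is to reduce the correctness statement to three separate claims, each of which I would establish in turn: (i) that Algorithm~\ref{alg:paths} returns exactly the nodes $v' \in \semp{q}{G}$; (ii) that for each such $v'$ the paths it enumerates are precisely the shortest paths $\pi$ from $v$ to $v'$ with $\lambda(\pi)$ accepted by $\mathcal{A}$; and (iii) that each such path is produced \emph{exactly once}. The engine underlying all three is the observation that Algorithm~\ref{alg:paths} is simply Algorithm~\ref{alg:plain} run on the product graph $G_\times$ with source $(v,q_0)$, restricted so that solutions are emitted only when the popped state pairs with $q_F$. I would therefore lean on Theorem~\ref{theo:enum} as a black box for the single-source behaviour of Algorithm~\ref{alg:plain} on an unlabelled graph, and spend the real work on the translation between paths in $G_\times$ and accepting runs in $G$.

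\smallskip

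First I would set up the standard correspondence between $G$ and $G_\times$. A path $\rho = (n_1,p_1)\,(n_2,p_2)\cdots(n_{k+1},p_{k+1})$ in $G_\times$ starting at $(v,q_0)$ projects onto a path $\pi = n_1 a_1 n_2 \cdots a_k n_{k+1}$ in $G$ together with a run $p_1 a_1 p_2 \cdots a_k p_{k+1}$ of $\mathcal{A}$ on $\lambda(\pi)$, where each edge $((n_i,p_i),(n_{i+1},p_{i+1}))$ of $G_\times$ witnesses both $(n_i,a_i,n_{i+1})\in E$ and $(p_i,a_i,p_{i+1})\in\delta$. Conversely every pair consisting of a path in $G$ and an accepting run of $\mathcal{A}$ on its label lifts to a unique path in $G_\times$ from $(v,q_0)$ to some $(n,q_F)$. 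Since $G_\times$ is \emph{unlabelled}, its path length is exactly the number of edges, which equals the length of the projected path $\pi$; hence the shortest paths in $G_\times$ from $(v,q_0)$ to a node $(v',q_F)$ correspond exactly to the shortest $G$-paths $\pi$ from $v$ to $v'$ admitting an accepting run of $\mathcal{A}$. This bijection simultaneously gives claims (i) and (ii): $v'\in\semp{q}{G}$ iff $(v',q_F)$ is reachable in $G_\times$, and by Theorem~\ref{theo:enum} Algorithm~\ref{alg:plain} enumerates precisely all shortest paths in $G_\times$ ending at $(v',q_F)$, which project onto precisely the desired shortest accepting $G$-paths. I would note that because $\mathcal{A}$ has a \emph{single} accepting state $q_F$, every shortest accepting $G$-path terminates at the \emph{same} target node $(v',q_F)$ of $G_\times$, so no answer $v'$ is produced from two distinct $G_\times$ targets.

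\smallskip

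The heart of the argument — and the step I expect to be the main obstacle — is claim (iii), showing each path is output \emph{precisely once}, and here is exactly where unambiguity enters. The bijection above is between $G$-paths-with-runs and $G_\times$-paths; a single $G$-path $\pi$ could in principle correspond to several distinct $G_\times$-paths if $\mathcal{A}$ admits several accepting runs on $\lambda(\pi)$, since distinct runs give distinct sequences of product states and hence distinct paths in $G_\times$. Because Algorithm~\ref{alg:plain} enumerates each $G_\times$-path once (this is the no-repetitions guarantee of the DFS traversal of the DAG encoded in the $prevList$ pointers, established in Theorem~\ref{theo:enum}), two distinct accepting runs on the same $\lambda(\pi)$ would cause $\pi$ itself to be printed twice. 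Unambiguity of $\mathcal{A}$ means every accepted word has exactly one accepting run, so the map $\rho\mapsto\pi$ from shortest $G_\times$-paths ending at $(v',q_F)$ to shortest accepting $G$-paths is not merely surjective but \emph{injective}, collapsing the potential multiplicity. I would make this precise by arguing: given $\pi$ with $\lambda(\pi)$ accepted, there is a unique accepting run, hence a unique lift $\rho$ in $G_\times$, and the single-final-state assumption guarantees $\rho$ ends at $(v',q_F)$ rather than being split across several final-state copies. Combining the injectivity from unambiguity with the single-final-state condition, distinct outputs of Algorithm~\ref{alg:paths} correspond to distinct $G$-paths, so each shortest accepting path is reported exactly once. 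I would close by remarking that both hypotheses are genuinely needed: dropping unambiguity reintroduces duplicate runs, and multiple final states would let the same path be reconstructed from more than one target node in the product graph.
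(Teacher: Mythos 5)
Your proof is correct and takes essentially the same route as the paper: treat Algorithm~\ref{alg:paths} as Algorithm~\ref{alg:plain} run on $G_\times$ from $(v,q_0)$, use unambiguity to make the correspondence between shortest product paths ending at $(v',q_F)$ and shortest accepting $G$-paths injective (hence \emph{precisely once}), and use the single accepting state to ensure that ``shortest to $(v',q_F)$ in $G_\times$'' coincides with ``shortest accepting path from $v$ to $v'$ in $G$.'' One small inaccuracy in your closing remark: with unambiguity retained but multiple final states, a given path's unique accepting run ends at a single state, so the same path would \emph{not} be reconstructed from two targets --- the actual failure mode (which your claim (ii) already handles correctly via the single-final-state assumption, and which the paper's proof emphasizes) is that a longer path, shortest only to a secondary target $(v',q_F')$, would wrongly be returned among the shortest paths reaching $v'$.
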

\begin{proof}
The fact that $\mathcal{A}$ is unambiguous assures that each path $\pi$ in $G$ can result in at most one accepting run in $\mathcal{A}$. Due to this, for each such path $\pi$ in $G$, there will be precisely one matching path in $G_\times$. This assures that no two runs match the same path. The second restriction; namely that there is only one accepting state, handles the case of automata where two paths of different length in $G$ might reach the same end node $n$, but the automaton runs they trace end in different accepting states, say $q$ and $q'$. Since the pairs $(n,q)$ and $(n,q')$ are different, the algorithm would not differentiate them on lines 14 and 18, so a longer path might be returned as one of the shortest paths reaching $n$. In case there is a single accepting state this cannot happen, since the node $n$ will already be ``tagged'' with the unique end state and the optimal distance. 
\end{proof}

One way to assure that $\mathcal{A}$ is unambiguous is to determinize it. Of course, this can potentially result in an exponential blow-up in the size of the automaton. However, in practice, the vast majority of RPQs are actually unambiguous, deterministic, or have very small deterministic version of the associated NFA \cite{BonifatiMT20}. Notice that for Algorithm \ref{alg:search}, automata need not be unambiguous, since they have a single accepting state. 

The assumption of having a single accepting state   can also be lifted (both in Algorithm~\ref{alg:search} and Algorithm~\ref{alg:paths}). Namely, we could track a list of nodes that had already been reached by some (shortest) path, and in the case they get reached via a longer path through a different final state, we would discard this path. The simplest way of implementing this would be by using a dictionary which has pairs of  nodes reached thus far, together with the shortest distance to them. Then, when a node is reached via (any) final state, we would first check if it was already returned, and if so, at what distance.

Given that Algorithm \ref{alg:paths} basically runs Algorithm \ref{alg:plain} over $G_\times$, and that the enumeration is correct (Theorem \ref{theo:corr}), we obtain the following:

\begin{theorem}
Let $q = (v,regex,?x)$ be an RPQ over a graph database $G$. The problem \arpq\ can be solved in time $O(|q|\times(|V|+|E|)+|O|)$, with $O$ being the set of all outputs of the query. 
In other words, the problem can be solved with output linear delay and $O(|q|\times |G|)$ precomputation phase.
\end{theorem}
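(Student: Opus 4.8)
The plan is to reduce \arpq\ to the single-source all-shortest-paths problem on an \emph{unlabelled} graph, namely the product graph $G_\times$, and then invoke the machinery of Section~\ref{sec:bfs}. Since Algorithm~\ref{alg:paths} is literally Algorithm~\ref{alg:plain} run on $G_\times$ with start node $(v,q_0)$ and outputs emitted only at nodes of the form $(n,q_F)$, most of the argument is already in place once two things are checked: first, that $G_\times$ has size $O(|q|\times(|V|+|E|))$ and can be traversed (built on the fly) within that bound; and second, that running Algorithm~\ref{alg:plain} on $G_\times$ enumerates exactly the required paths with output-linear delay.

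First I would bound the size of the product graph. Converting $regex$ into an NFA $\mathcal{A}=(Q,\Sigma,\delta,q_0,q_F)$ by a standard construction gives $|Q|=O(|regex|)$ and $|\delta|=O(|regex|)$, so I write $|q|$ for the size of $\mathcal{A}$. The node set $V_\times=V\times Q$ has $|V_\times|=O(|q|\times|V|)$. For the edges, observe that a single graph edge $(x,l,y)\in E$ contributes one product edge per transition $(q_x,l,q_y)\in\delta$ on the matching label $l$, hence at most $|\delta|$ product edges; summing over $E$ yields $|E_\times|=O(|q|\times|E|)$. The neighbour computation $Neighbours(current)$ (line~13) enumerates exactly these product edges, so no edge is touched more than a constant number of times, and traversing $G_\times$ costs $O(|q|\times(|V|+|E|))$.

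Next I would transfer the guarantees of Section~\ref{sec:bfs}. By Theorem~\ref{theo:enum}, Algorithm~\ref{alg:plain} solves \prob\ on any unlabelled graph in time $O(|V'|+|E'|+|O'|)$, where $|O'|$ is the total size of the enumerated shortest paths; applied to $G_\times$ this is $O(|q|\times(|V|+|E|)+|O'|)$, and by its corollary the enumeration proceeds with output-linear delay after $O(|V_\times|+|E_\times|)=O(|q|\times|G|)$ preprocessing. It remains to identify $|O'|$ with the genuine query output $|O|$. This is what Theorem~\ref{theo:corr} provides: under the unambiguity and single-final-state hypotheses, the shortest matching paths of $G$ are in length-preserving bijection with the shortest paths of $G_\times$ ending in $(n,q_F)$, each enumerated precisely once. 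In particular a product path and its projection to $G$ have the same length, so $|O'|=\Theta(|O|)$, giving total time $O(|q|\times(|V|+|E|)+|O|)$ and the stated output-linear delay with $O(|q|\times|G|)$ preprocessing.

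The main obstacle is not the routine complexity accounting but the bijection step, and a caveat hiding inside it. Two points need care. The correctness reduction forces $\mathcal{A}$ to be unambiguous, and the only general way to guarantee this is determinization, which can blow $|q|$ up exponentially; thus the bound is honestly stated in terms of the size of the \emph{unambiguous} automaton, which coincides with $|regex|$ for the deterministic or small-NFA expressions that dominate practice. Second, I must verify that BFS distance in $G_\times$ really equals path length in $G$ along matching paths, so that ``shortest in $G_\times$'' and ``shortest matching path in $G$'' agree; this holds because each step of a matching $G$-path corresponds to exactly one product edge, but it is precisely the detail that makes the single-final-state assumption necessary, since otherwise a node $n$ could be reached at two different states and a longer path mistaken for a shortest one, the failure analysed in the proof of Theorem~\ref{theo:corr}.
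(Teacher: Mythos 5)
Your proposal is correct and follows essentially the same route as the paper, which states this theorem as an immediate consequence of running Algorithm~\ref{alg:plain} on $G_\times$ (with the size bound $O(|q|\times|G|)$ on the product graph) together with the correctness guarantee of Theorem~\ref{theo:corr} and the complexity/delay analysis of Theorem~\ref{theo:enum} and its corollary. You merely make explicit the steps the paper leaves implicit --- the product-graph size accounting, the length-preserving bijection between shortest matching paths in $G$ and shortest paths in $G_\times$ ending at $(n,q_F)$, and the honest caveat that $|q|$ refers to the size of the unambiguous automaton --- all of which are consistent with the paper's argument.
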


\subsection{Counting all shortest paths}

An interesting variation of Algorithm \ref{alg:paths} is the one where we do not reconstruct all the shortest paths, but merely count how many there are. That is, we will be solving the following problem:
\\
\begin{framed}
  \begin{tabular}{ll}
    \textsc{Problem}: & \crpq\\
    \textsc{Input}: & A graph database $G$, and an RPQ $q = (v,regex,?x)$ over $G$.\\
    \textsc{Output}: & For each node $v'\in \semp{q}{G}$, output the number of shortest paths reaching $v'$. 
  \end{tabular}
\end{framed}

  \begin{algorithm*}
  \caption{\crpq\ algorithm for a graph database $G = (V,E)$ and an RPQ $q= (v,regex,?x)$}
 \label{alg:count}
 \begin{algorithmic}[1]
 \Function{Search}{$G,q$}
 	\State $\mathcal{A}\gets Automaton(regex)$ \Comment{$q_0$ initial, $q_F$ final}
 	\State $Open.init()$ \Comment{Empty queue(BFS).}
 	\State $Visited.init()$ \Comment{Empty set of visited nodes.}
	 \State $start \gets (v,q_0,0,1)$ \Comment{A single empty path to the start node.}
 	\State $Open.push(start)$
 	\State $Visited.push(start)$
 	\While{!Open.isEmpty()}
 		\State current=Open.pop() \Comment{$current = (n,q,depth,numPaths)$}
		\If{$q== q_F$} \Comment{We reached a solution}
			\State $solutions.add(n,numPaths)$ \Comment{All shortest paths already reached $n$}
		\EndIf

 		\For{next=$(n',q')$ $\in$ Neighbours(current)}
	 		\If{!(next) $\in$ Visited} \Comment{numPaths or depth are not compared for equality}
				\State new = $(n',q',depth+1,numPaths)$
 				\State Open.push(new)
 				\State Visited.push(new)
 			\EndIf
 			\If{next=(n',q') $\in$ Visited} 
 			    \State new = Visited.get(n',q') \Comment{new = (n',q',depth',numPaths')} 
 			    \If{depth' == depth+1} \Comment{Another shortest path to (n',q')}
 			        \State numPaths' += numPaths \Comment{This updates the values in Visited}
 			    \EndIf
 			\EndIf
 		\EndFor 
 	\EndWhile
 \EndFunction
  \end{algorithmic}
 \end{algorithm*}
 
 This can be done by, instead of storing the list $prevList$ of pointers in Algorithm \ref{alg:paths}, simply updating the number of currently encountered shortest paths to a single node. Namely, when a new node is added to the list, all paths going from this node are added to the count. This is useful since in databases one is often interesting in having the multiplicity of the answer as well as the answer itself. Such an algorithm would then not have the memory requirement on $Visited$ as big as Algorithm \ref{alg:paths}, since instead of the list, we would simply store a single number indicating how many shortest paths there are reaching a certain node from $(v,q_0)$. Again, for unambiguous automata, this would then amount to shortest paths in the original graph database $G$ as well. We present this version of the algorithm in Algorithm~\ref{alg:count}. The states stored in $Open$ and $Visited$ are now of the form $(n,q,depth,numPaths)$, with $n,q$ and $depth$ as in Algorithm~\ref{alg:paths}, and $numPaths$ counting the number of paths encountered thus far. It straightforward to extend all of the presented algorithms to also cover the case of 2RPQs~\cite{CalvaneseGLV02}, which allow the edges to be traversed backwards.

\section{Conclusions and looking ahead}
\label{sec:concl}
In this paper we provided an illustration of how the classical BFS algorithm can be extended in order to also reconstruct all shortest paths from a single source node to all the reachable nodes. We then applied this algorithm in the setting of graph databases in order to evaluate regular path queries, extending them with the option to return all shortest paths, instead of a single path.
In the future, we would like to explore whether this algorithm can be extended in order to also reason on simple paths, trails, or to return all paths up to a certain length, ordered from shortest to longest. We believe such contribution can be important in order to better understand the landscape of different query modes for regular path queries. 

In terms of implementation, all of the presented algorithms have already been implemented within MillenniumDB~\cite{MillDB}, a recent open-source graph database engine. An interested reader can try these out by accessing the MillenniumDB code at \url{https://github.com/MillenniumDB/MillenniumDB}.
%

\noindent{\textbf{Acknowledgements.}}
This work was supported by ANID -- Millennium Science Initiative Program -- Code ICN17\_002. I would also like to thank Carlos Rojas, Juan Romero and Wim Martens for their input.


\bibliography{nourlbiblio}




\end{document}